\newsavebox{\fmbox}
\newcommand{\R}{\mbox{\rm R}}
\newcommand{\Rplus}{\R^+}
\newtheorem{theorem}{Theorem}[section]
\newtheorem{lemma}[theorem]{Lemma}
\def\A{${\cal A}$}
\newtheorem{claim}{Claim}
\newcommand\eat[1]{}
\begin{document}

\title{Optimal Approximation Algorithms for Multi-agent Combinatorial Problems with Discounted Price Functions}
\author{
{\Large\em  Gagan Goel, Pushkar Tripathi and Lei Wang\thanks{College
of Computing, Georgia Institute of Technology, Atlanta, GA
30332--0280}
 }
}

\date{}
\maketitle
\begin{abstract}
Submodular functions are an important class of functions in combinatorial optimization which satisfy the natural properties of decreasing marginal costs.  The study of these functions has led to strong structural properties with applications in many areas. Recently, there has been significant interest in extending the theory of algorithms for optimizing combinatorial problems (such as network design problem of spanning tree) over submodular functions. Unfortunately, the lower bounds under the general class of submodular functions are known to be very high for many of the classical problems. 

In this paper, we introduce and study an important subclass of submodular functions, which we call discounted price functions. These functions are succinctly representable and generalize linear cost functions. In this paper we study the following fundamental combinatorial optimization problems: Edge Cover, Spanning Tree, Perfect Matching and Shortest Path, and obtain tight upper and lower bounds for these problems.  

The main technical contribution of this paper is designing novel adaptive greedy algorithms for the above problems. These algorithms greedily build the solution whist rectifying mistakes made in the previous steps.
\end{abstract}
\thispagestyle{empty}
\setcounter{page}{0}
\newpage

\section{Introduction}
Combinatorial optimization problems such as matching, shortest path and spanning tree have been extensively studied in operations research and computer science. In these problems we are given a ground set $E$ and a collection $\Omega$ of subsets of the ground set that is usually implicitly defined by some combinatorial propoerty. We are also given the cost of each element in $E$ and the price of a subset in $\Omega$ is defined to be the sum of costs of its constituents. The objective is to find a subset in $\Omega$ with the minimum price. This framework with linear cost functions has served as the simplified model for many practical problems.


However, linear cost functions do not always model the complex dependencies of the prices in a real-world setting. Often, they only serve as approximations to the original functions. Hence solutions to these problems are rarely useful in the practical realm where these simplifying assumptions breakdown.

Another feature that is observed in a real-world scenarios is the presence of multiple agents, with different price functions, who wish to collaborate to build a single combinatorial structure. For linear cost functions, it is easy to see that having multiple agents doesn't change the complexity of the original problem. This is not the case for more general cost functions.

Submodular functions provide a natural way to model the non-linear dependencies between prices. These functions are the discrete analog of convexity and exhibit well known economic properties such as economy of scale and decreasing marginal cost. There has been some recent work \cite{Zoya,GKTW, satoru} to study the approximability of combinatorial problems under such functions and tight upper and lower bounds are known for many problems in this very general framework. 

However, the submodular setting suffers from a major drawback. Although in real world applications an agent's cost function is not necessarily as simple as a linear function, it is often not as complicated as a general submodular function either. i.e. submodular functions seem to tackle the problem in too much generality which often makes problems harder than required. As a result most of the bounds presented in previous work in this area are weak and have little practical applicability. A barrier in designing efficient algorithms for submodular cost functions arises from difficulty in representing them succinctly. Since a general submodular function is defined over an exponentially sized domain it is customary to assume that the function can be accessed by an oracle which can be querried polynomial number of times. This limits our ability to design efficient algorithms for these problems owing to the incomplete information about the functions, resulting in information theretic lower bounds.

In this paper we study an important subclass of succinctly representable submodular functions called \textit{discounted price functions} in the multi-agent setting. These functions are motivated by the observation that often in practical settings agents offer discounts based on the amount of goods being purchased from them i.e. purchasing more goods from a single agent results in higher discounts. Apart from such practical considerations discount functions have strong theoretical motivation too. Recent work \cite{goemansLearning} pertaining to learning submodular functions has lead to succicnt representations for submodular functions. These representations also behave like discounted price functions.

\subsection*{Discounted Price Model}
We define a function $d: \Rplus \rightarrow \Rplus$ to be a discounted price function if it satisfies the following properties: (1) $d(0)=0$; (2) $d$ is increasing; (3) $d(x)\le x$ for all $x$; (4) $d$ is concave. 

We study combinatorial problems in the following general setting. We are given a set of elements, $E$, a collection $\Omega$ of its subsets. We are also given a set ${\cal A}$ of $k$ agents where each agent, $a \in {\cal A}$
specifies a cost function $c_a:E\rightarrow \Rplus$ where $c_a(e)$ indicates her cost for the element $e$. Each agent also declares a discounted price function, $d_a$. If an agent $a$ is assigned a set of elements $T$, then her total price is specified by $d_a(\sum_{e\in T}c_a(e))$. This is called her \textit{discounted price}. The objective is to select a subset $S$ from $\Omega$ and a partition $S_1,S_2,...,S_k$ of $S$, such that $\sum_{a \in {\cal A}}d_a(\sum_{e\in S_a}c_a(e))$ is minimized.

Our frame work generalizes the classical single-agent linear cost setting. On the other hand, (4) implies that discounted price functions are satisfy the decreasing marginal cost property. Moreover, since any concave function can be well approximated by a piecewise linear function, discounted price functions can be viewed as a class of succinctly represented submodular functions.

We study the following four problems over an undirected complete graph $G = (V,E)$.

\begin{itemize}
\item \textbf{Discounted Edge Cover}: In this problem, $\Omega$ is chosen to be the collection of edge covers.

\item \textbf{Discounted Spanning Tree}: In this problem, $\Omega$ is the collection of spanning trees of the graph.

\item \textbf{Discounted Perfect Matching}: In this problem, we assume the graph has an even number of vertices. $\Omega$ is chosen to be the collection of perfect matchings of the graph.

\item \textbf{Discounted Shortest Path}: In this problem, we are given two fixed vertices $s$ and $t$. $\Omega$ consists of all paths connecting $s$ and $t$.
\end{itemize}

\subsection*{Our Results}
We summarize our results in the following theorem:
\begin{theorem}
There is an $O(\log n)$ approximate algorithm for each of the following problems: discounted edge cover,spanning tree,perfect matching and shortest path, where $n$ is the number of vertices in their instances. Moreover, it is hard to approximate any of them within factor $(1-o(1))\log n$ unless \emph{NP}=\emph{DTIME}$(n^{O(\log \log n)})$.
\end{theorem}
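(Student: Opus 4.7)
The plan is to split the theorem into its algorithmic upper bound and its matching NP-hardness lower bound. For the upper bound, my approach is to view each of the four problems as the task of covering the appropriate combinatorial structure by agent-bundles, and to design an adaptive greedy algorithm in the spirit of greedy Set Cover. The key structural observation I would establish first is that, since every $d_a$ is concave, monotone and satisfies $d_a(0)=0$, the induced set-function
\[
f(S) \;=\; \min_{S_1,\ldots,S_k \text{ partition } S}\; \sum_{a \in {\cal A}} d_a\!\left(\sum_{e \in S_a} c_a(e)\right)
\]
is monotone and submodular; thus each problem reduces to minimizing a submodular objective subject to the corresponding combinatorial constraint.

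The algorithm I would design is an adaptive greedy. In each round it picks a \emph{configuration}, namely an agent $a$ together with a bundle $T\subseteq E$, that maximizes the ratio of ``progress toward feasibility'' to the marginal cost $d_a(\sum_{e\in T}c_a(e))$. The progress measure depends on the problem: newly covered vertices for Edge Cover, the drop in the number of connected components for Spanning Tree, newly matched vertices for Perfect Matching, and the reduction of the residual $s$--$t$ distance for Shortest Path. Following the $H_n$-style charging argument of greedy Set Cover, each round pays at most $O(1)$ times the optimum's amortized ratio against the new progress, and summing over at most $n$ rounds yields an $O(\log n)$ bound. The ``adaptive'' component flagged in the abstract enters because, after a new configuration is purchased, edges bought in earlier rounds may now be redundant, create cycles, or violate degree constraints; the algorithm revisits these choices and trims or reassigns them without losing the potential accumulated so far.

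For the lower bound, the plan is to reduce from Set Cover, which by Feige's theorem is inapproximable within $(1-o(1))\log n$ unless NP $\subseteq$ DTIME$(n^{O(\log\log n)})$. Given an instance with universe $U=\{u_1,\ldots,u_N\}$ and sets $S_1,\ldots,S_m$, I would create one agent $a_i$ per set, equip it with the discount function $d_{a_i}(x)=\min\{1,x\}$ (which is concave, $0$ at $0$, increasing, and $\le x$, hence admissible), and assign $c_{a_i}(e)=1$ on the edges representing elements of $S_i$ and $c_{a_i}(e)=+\infty$ elsewhere. Because $d_{a_i}$ saturates at $1$, each used agent contributes exactly $1$, so the total cost equals the number of selected agents. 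The combinatorial feasibility requirement---covering the vertices for Edge Cover, spanning them for Spanning Tree, matching them for Perfect Matching, or providing an $s$--$t$ path for Shortest Path---forces every universe element to be hit by at least one selected set. A small tailored gadget will be needed for each of the four problems (most delicately for Perfect Matching, where dummy vertices must be paired off without altering the reduction's cost), but the template is uniform.

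The main obstacle is the upper bound for Spanning Tree and Perfect Matching. Unlike Edge Cover and Shortest Path, where greedily adding bundles cannot destroy feasibility, here a newly purchased configuration can conflict with previously selected edges by creating cycles or degree violations, and the naive configuration space is exponential so one must also solve the per-round subproblem of finding the best agent-bundle efficiently (using the matroid / matching structure together with concavity of $d_a$). Designing the rectification step---so that it discards conflicting edges and repairs the combinatorial structure while preserving the amortized potential that drives the $O(\log n)$ analysis---will be the crux of the argument.
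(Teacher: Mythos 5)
Your lower-bound plan is essentially the paper's: encode Set Cover by one agent per set with a discount function that saturates once anything is bought from that agent, and invoke Feige's $(1-o(1))\log n$ hardness. Two small remarks: $d(x)=\min\{1,x\}$ is not increasing as the model requires, so you need a tiny positive slope on the second segment (as the paper does); and the embedding into each combinatorial problem is easier than you fear --- one simply takes the instance graph to \emph{be} a single feasible structure (a tree, a perfect matching, an $s$--$t$ path) and lets the only question be how to allocate its edges among agents, so no gadgets are needed on this side.

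The upper bound, however, has genuine gaps. First, your opening ``structural observation'' that $f(S)=\min_{\text{partitions}}\sum_a d_a(\sum_{e\in S_a}c_a(e))$ is submodular is unjustified (a minimum over partitions of submodular functions is not submodular in general) and, fortunately, unused. Second, and more importantly, the two things you defer are exactly where the paper's content lies. (a) The per-round oracle: the configuration space $\mathcal{A}\times 2^E$ is exponential, and the paper shows the min-ratio configuration is computable by fixing the agent and the amount of ``new progress'' $t$ and reducing to a classical min-cost edge cover or min-weight perfect matching in an auxiliary graph; without this the algorithm is not polynomial time. (b) The rectification step: for matching the paper's mechanism is to redefine $c_a^M$ to be zero on the current matching $M$ and to restrict the greedy to pairs $(a,F)$ such that $M\cup F$ contains a perfect matching on $Z\cup V(F)$, so rematching old vertices is free and feasibility is maintained by construction; for spanning tree it is to contract the components bought so far and to assign potentials to the contracted super-vertices as well, which is why the analysis needs \emph{two} harmonic sums. ``Trims or reassigns them without losing the potential'' does not substitute for these constructions. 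Third, your shortest-path algorithm diverges from the paper and is unlikely to work as stated: a ratio-greedy driven by ``reduction of the residual $s$--$t$ distance'' has no evident set-cover-style charging argument (the optimal path need not decompose into bundles of good ratio, and residual distance is ill-defined under discounted multi-agent prices). The paper instead gives a cost-preserving gadget reduction from shortest path to perfect matching, which it highlights as a main contribution precisely because a direct greedy appears not to exist for this problem.
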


Our main technical contribution is the introduction of a new algorithmic technique called \textit{adaptive greedy algorithms}. In traditional greedy algorithms like set cover, we make the most 'efficient' decision at every step to inch closer to the objective. However, these decision are irreversible and cannot be modified in subsequent iterations. For example in the greedy set cover algorithm, if a set is chosen in a particular iteration it will surely be included in the solution. This approach works well for problems where the solution does not need to satisfy combinatorial properties, but fails in the combinatorial setting. This is because, choosing a particular set may preclude the choice of other efficient sets in future iterations owing to the combinatorial constraints. To rectify this problem we allow our algorithm to undo some of the decisions taken during previous iterations. We call such algorithms adaptive greedy algorithms. Another important contribution of the paper is to link the shortest path problem to the minimum weight perfect matching problem by a factor preserving reduction. 

From the hardness perspective, we consider the \emph{discounted reverse auction} problem in which the collection is devoid of any combinatorial constraints. Using a factor preserving reduction from the set cover problem we prove that it is hard to approximate the discounted reverse auction within factor $(1-o(1))\log n$ unless \emph{NP}=\emph{DTIME}$(n^{O(\log \log n)})$. We use discounted reverse auction as a starting point to prove the hardness of all the combinatorial problems considered in this paper. 

From the above argument, one would expect that our combinatorially structured problems might have weaker lower bounds than discounted reverse auction. However, interestingly, we find it is not the case, as shown by the $log$ approximate algorithms in the section \ref{algo section}.


\subsection*{Related work}
The classical versions of edge cover, spanning tree, perfect
matching and shortest path are well studied and polynomial time
algorithms are known for all these problems, see \cite{tree1, tree2,
matching1, matching2, path, edgecover}. These have served as part of the 
most fundamental combinatorial optimization model in the
development of computer science and operation research.

Recently, Svitkina and Fleischer \cite{Zoya} generalized the linear
cost settings of some combinatorial optimization problems such as
sparsest cut, load balancing and knapsack to a submodular cost
setting. They gave $\sqrt{n/\log n}$ upper and lower bounds for all
these problems. \cite{satoru} also studied combinatorial optimization 
problems in the single agent setting.

Multi-agent setting was first introduced in \cite{GKTW}. In their model of \emph{combinatorial problems with multi-agent
submodular cost functions} they gave $\Omega(n)$ lower bounds for
the problems of submodular spanning tree and perfect matching, and
gave a $\Omega(n^{2/3})$ lower bound for submodular shortest path.
They also gave the matching upper bounds for all these problems. 
We remark that the lower bounds presented in \cite{Zoya,GKTW}
are information theoretic and not computational.

Similar generalization from linear objective function
to the more general submodular function was applied to
maximization problems in \cite{maximizing, matroid1, matroid2}. The
multi-agent generalization of maximization problems which corresponds to
\emph{combinatorial auction} has been extensively studied both in computer science and economics \cite{CA,AGT} and tight information
lower bounds are known for these problems, see \cite{tight}.

\section{Algorithms for Discounted Combinatorial Optimization}
\label{algo section}
In this section we present approximation algorithms for the combinatorial problems defined earlier. In each of the problems we are given a complete graph $G=(V,E)$ over $n$ vertices. We are also given a set \A, of $k$ agents each of whom specifies a cost $c_a: E \rightarrow \Rplus$. Here $c_a(e)$ is the cost for building edge $e$ for agent $a$. Each agent also specifies a discounted price function given by $d_a: \Rplus \rightarrow \Rplus$. The objective is
to build a prespecified combinatorial structure using the edges in $E$, and allocate these edges among the agents such that the sum of discounted prices for the agents is minimized.


\subsection{Discounted Edge Cover} \label{edge cover}
In this section, we study the discounted edge cover problem and
establish a $\log n$ approximate algorithm for this problem.

Here is the basic idea behind our algorithm. Given a discounted edge
cover instance, we construct a set cover instance such that: (1) an
optimal edge cover corresponds to a set cover with the same cost and
(2) a set cover corresponds to an edge cover with a smaller price.
In the set cover instance, we apply the greedy algorithm from
\cite{setcoverAlgo} to get a set cover whose cost is within $\log n$ of
the optimal cost, then the corresponding edge cover gives an $\log
n$ approximation of the optimum edge cover. We remark that we will
have exponentially many sets in the set cover instance that we
construct for our problem. To apply the greedy algorithm, we need to
show that in each step, the set with the lowest \textit{average
cost} can be found in polynomial time.

Now we state our algorithm formally. Consider a set cover instance
where we have to cover the set of vertices, $V$, with $k2^n$ subsets
which are indexed by $(a,S) \in {\cal A} \times 2^V$. The cost of
the set $(a,S)$, denoted by $cost(a,S)$, is defined as the minimum
discounted price of an edge cover for the vertices in $S$ that can
be built by agent $a$. For the instance of set cover described
above, we apply the greedy algorithm \cite{setcoverAlgo}
 to get a set cover ${\cal S}$. Let $U_a$ be the set of vertices covered by sets of the form $(a,S) \in {\cal
S}$. Let $C_a$ be agent $a$'s optimal discounted edge cover of the
vertices in $U_a$. We output $\{C_a: a \in {\cal A} \}$ as our
solution.

The correctness of the algorithm follows from the observation that
each $C_a$ is a cover of $S_a$ thus their union must form an edge
cover of the vertex set $V$.

Now we show that the running time of the algorithm is polynomial in
$k$ and $n$. Notice that when $U_a$ is given, since agent $a$'s
discounted price function is increasing, $C_a$ can be found in
polynomial time. Hence we only need to show that the greedy
algorithm on our set cover instance can be done in polynomial time.

Recall that the greedy algorithm from \cite{setcoverAlgo} covers the ground set, $V$, in phases. Let $Q$ be
the set of covered vertices in the beginning of a phase and the
\textit{average cost} of a set $(a,S)$ is defined as $\alpha_a(S) =
cost(a,S)/|S-Q|$. The algorithm picks the set with the smallest
average cost, covers the corresponding vertices in that set and go
to the next phase until all the vertices are covered. To show that
this algorithm can be implemented efficiently, we only need to show the following:

\begin{lemma}\label{most efficient subset}
For any $Q\subset V$, we can find min$\{cost(a,S)/|S-Q|:(a,S)\in
{\cal A} \times 2^V\}$ in polynomial time.
\end{lemma}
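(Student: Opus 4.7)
The plan is to decompose the outer minimum $\min_{(a,S)\in\mathcal{A}\times 2^V}\cost(a,S)/|S-Q|$ by enumerating first over agents $a\in\mathcal{A}$ (there are only $k$ of them) and then over sizes $t:=|S-Q|\in\{1,\ldots,|V-Q|\}$, reducing to $O(kn)$ sub-problems. For a fixed pair $(a,t)$ I would first argue that we may assume $S\subseteq V-Q$ with $|S|=t$: including any $u\in Q$ in $S$ only enlarges the set that must be edge-covered, weakly increasing the minimum weight $M_a(S)$ of an edge cover of $S$ under $c_a$ and hence $\cost(a,S)=d_a(M_a(S))$ (since $d_a$ is monotone), while $|S-Q|$ is unchanged. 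Monotonicity of $d_a$ further lets me strip it outside the inner minimization:
\[
\min_{S\subseteq V-Q,\,|S|=t}\cost(a,S)\;=\;d_a\!\left(\min_{S\subseteq V-Q,\,|S|=t}M_a(S)\right)\;=\;d_a(h_a(t)),
\]
so the final answer is $\min_{a,t} d_a(h_a(t))/t$, computable once each $h_a(t)$ is.

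The crux is computing $h_a(t)$, a \emph{partial} minimum-weight edge-cover problem: simultaneously choose $t$ vertices of $V-Q$ and edge-cover them at minimum total weight under $c_a$. I plan to reduce this to a degree-constrained minimum-weight edge cover. Augment $G$ by a single new vertex $z$, and add a zero-cost edge $\{v,z\}$ for every $v\in V-Q$. Then compute the minimum-weight edge cover of $(V-Q)\cup\{z\}$ in this augmented graph subject to the single constraint $\deg(z)=|V-Q|-t$. In any optimum the $|V-Q|-t$ free $z$-edges cover exactly that many vertices of $V-Q$ (effectively excluding them from the real-edge cover), while the remaining $t$ vertices are covered by real edges of minimum total weight; hence the optimum value equals $h_a(t)$, and the choice of which $t$ vertices to cover is made optimally by the edge-cover routine itself.

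The main obstacle I expect is verifying polynomial solvability of this degree-constrained edge-cover problem. My plan is to reduce it to minimum-weight $b$-matching, extending the classical Gallai-type reduction from minimum-weight edge cover to minimum-weight matching so that the single additional degree bound on $z$ is preserved. An alternative route is a parametric Lagrangian: attach a multiplier $\lambda$ to every $z$-edge and solve the unconstrained minimum-weight edge-cover problem for all relevant $\lambda$, extracting from the resulting piecewise-linear curve the optimum for each target degree $|V-Q|-t$. Either route delivers $h_a(t)$ in polynomial time, and combining it with the outer $O(kn)$ enumeration and a final evaluation of $d_a(h_a(t))/t$ proves the lemma.
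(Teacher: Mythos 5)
Your outer decomposition (enumerate the agent, then the cardinality $t=|S-Q|$, and pull $d_a$ outside the inner minimization by monotonicity) matches the paper's, as does the reduction of the inner problem to an edge-cover computation on an augmented graph with dummy vertices absorbing the vertices you do not wish to cover. But you diverge at exactly the point the paper identifies as the difficulty: you insist on computing the \emph{exact}-cardinality optimum $h_a(t)$, which forces you into a degree-constrained edge cover with an equality constraint $\deg(z)=|V-Q|-t$, and you explicitly leave the polynomial solvability of that constrained problem as your ``main obstacle.'' The paper instead observes that the exact version is problematic (it even asserts NP-hardness for its reading of $cost(a,S)$) and replaces it with the relaxed problem ``minimize $cost(a,S)$ subject to $|S-Q|\ge t$,'' which is a plain, unconstrained minimum-weight edge cover on a graph augmented with $|Q|$ zero-cost pendant vertices and $n-t$ dummy vertices. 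A short exchange argument (the paper's Claim 3) then shows that among these $n$ relaxed optima $S_1,\dots,S_n$ the true minimizer of the ratio is always attained, since the relaxed optimum $S_d$ for $d=|\bar S - Q|$ has cost at most $cost(a,\bar S)$ and denominator at least $|\bar S-Q|$. That one observation removes the need for any degree constraint.

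The gap in your proposal is therefore concentrated in the unproven claim that the degree-constrained edge cover is polynomial. Your first route (extend the Gallai-type reduction to a $b$-matching with one equality-constrained vertex) can be made to work --- interval-constrained degree prescriptions keep the weighted general factor problem polynomial --- but you have not carried it out, and it is the entire content of the lemma. Your second route is genuinely flawed: a parametric Lagrangian on the $z$-edges only recovers the points of the trade-off curve lying on its lower convex envelope, so it can miss $h_a(t)$ for specific targets $t$; as stated it does not ``deliver $h_a(t)$'' for every $t$. Note also that under your own definition of $M_a(S)$ (edges covering $S$ may have endpoints outside $S$) the map $S\mapsto M_a(S)$ is monotone, so $\min\{M_a(S):|S|=t\}=\min\{M_a(S):|S|\ge t\}$ and the entire degree-constraint machinery is unnecessary --- you could land directly on the paper's unconstrained construction. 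If instead $cost(a,S)$ is read as requiring the covering edges to lie inside $G[S]$, then your WLOG step ($S\subseteq V-Q$) and your gadget both break, which is presumably why the paper avoids the exact-cardinality formulation altogether.
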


\begin{proof}
We can iterate over all choices of agent $a \in {\cal A}$, thus the
problem boils down to finding min$\{cost(a,S)/|S-Q|: S\subseteq V\}$
for each $a\in {\cal A}$.

For each integer $d$, if we can find min$\{cost(a,S):|S-Q|=d\}$ in
polynomial time, then we are done since then we can just search over
all the possible sizes of $S-Q$. Unfortunately, it is
\emph{NP}-hard to compute min$\{cost(a,S):|S-Q|=d\}$ for all
integer $d$. We will use claim \ref{d edge cover} to circumvent this problem.

\begin{claim}\label{d edge cover}
For any graph $G = (V,E)$ and $Q \subseteq V$ and for any positive
integer $d$, we can find the set $(a,S)$ minimizing $cost(a,S)$ such
that $|S-Q|$ is at least $d$, in polynomial time.
\end{claim}

\begin{proof}
Refer to Appendix \ref{d edge cover proof}.
\end{proof}

By claim \ref{d edge cover} above we can generate a collection of
subsets $\{S_i \subseteq V : 1 \leq i \leq n\}$, such that $(a,S_i)$
has the lowest value of $cost(a,S)$ among all sets $S$ which satisfy
$|S - Q|\geq i$.

\begin{claim}
min$\{cost(a,S)/|S-Q|: S\subseteq V\}=min\{cost(a,S_i)/|S_i-Q|:1\le
i\le n\}$.
\end{claim}
\begin{proof}
Let $\bar S$ be the set that has the minimum average cost with respect
to agent $a$. Suppose $|\bar S-Q|=d$. By our choice of $S_d$, we have
$|S_d-Q|\ge d=|\bar S-Q|$ and $cost(a,S_d)\le cost(a,\bar S)$. Therefore
we have $cost(a,S_d)/|S_d-Q|\le cost(a,\bar S)/|\bar S-Q|$, hence they
must be equal.
\end{proof}

By iterating over all $a\in {\cal A}$, we can find
min$\{cost(a,S)/|S-Q|:(a,S)\in {\cal A} \times 2^V\}$ in polynomial
time.
\end{proof}

Next we show that the approximation factor of our algorithm is $\log
n$. Let $OPT_{EC}$ and $OPT_S$ denote the costs of the optimal
solutions for the discounted edge cover instance and the
corresponding set cover instance respectively. Let $\{C_a: a \in
{\cal A} \}$ be the edge cover reported by our algorithm. For all $a
\in {\cal A}$ let
 $O_a$ be the set of vertices covered by
agent $a$ in the optimal edge cover. The sets $\{(a,O^a): a \in
{\cal A}\}$ form a solution for the set cover instance. Therefore
$OPT_S\le OPT_{EC}$.

Since we use the greedy set cover algorithm to approximate $OPT_S$,
we have 
	\[\sum_{a \in {\cal A}} d_a(\sum_{e\in C_a}c_a(e))\le (\log
n) OPT_S \le (\log n) OPT_{EC} \]

Thus we have the following theorem.
\begin{theorem}
\label{edge cover thm} There is a polynomial time algorithm which
finds a $\log n$-approximate solution to the discounted edge cover
problem for any graph over $n$ vertices.
\end{theorem}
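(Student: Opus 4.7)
The plan is to compile the material already developed in this section into a formal proof. The algorithm itself is: reduce the instance to a set cover problem on universe $V$ with weighted sets $\{(a,S): a \in {\cal A}, S \subseteq V\}$ where $cost(a,S)$ is agent $a$'s optimal discounted price for covering $S$; run the greedy algorithm of \cite{setcoverAlgo} to obtain a subcollection ${\cal S}$; and for each agent $a$ output $C_a$, an optimal edge cover of the union $U_a$ of the sets chosen for her. Two things must be established: polynomial running time and the $\log n$ approximation guarantee.

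For polynomial running time, the immediate worry is that the set cover instance contains $k\cdot 2^n$ sets. Lemma~\ref{most efficient subset} dispatches this: in each greedy phase, the minimum average cost set can be located in polynomial time. Given the output ${\cal S}$, each individual edge cover $C_a$ can then be built in polynomial time, since monotonicity of $d_a$ reduces the single-agent subproblem on the vertex set $U_a$ to the classical minimum weight edge cover on the induced subgraph, which is polynomial-time solvable. Because the greedy procedure terminates in at most $n$ phases, the overall running time is polynomial in $k$ and $n$.

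For the approximation ratio, I would argue via two inequalities. First, $OPT_S \le OPT_{EC}$: if $\{O_a:a\in{\cal A}\}$ is the vertex partition induced by an optimal edge cover, then $\{(a,O_a):a\in{\cal A}\}$ is a valid set cover whose total weight is at most $OPT_{EC}$ by the definition of $cost(a,\cdot)$. Second, the greedy algorithm returns a set cover of weight at most $(\log n) OPT_S$, and converting this back to an edge cover does not increase the objective, since each $C_a$ is agent $a$'s \emph{optimal} cover of $U_a$ and its contribution $d_a(\sum_{e\in C_a}c_a(e))$ is therefore bounded by $\sum_{S:(a,S)\in{\cal S}} cost(a,S)$. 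Chaining these bounds yields $\sum_{a}d_a(\sum_{e \in C_a} c_a(e)) \le (\log n) OPT_{EC}$, which is the desired conclusion.

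The main obstacle in this argument is already handled, namely Lemma~\ref{most efficient subset} (and through it Claim~\ref{d edge cover}), which is what makes greedy implementable over the exponential-size set system; once that is in hand, the theorem is just a packaging of the two optimality comparisons above.
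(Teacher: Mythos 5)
Your proposal is correct and follows essentially the same route as the paper: reduce to an exponential-size set cover instance, implement greedy via Lemma~\ref{most efficient subset}, and chain $OPT_S \le OPT_{EC}$ with the greedy guarantee. The one step you state slightly more explicitly than the paper --- that $d_a(\sum_{e\in C_a}c_a(e)) \le \sum_{S:(a,S)\in{\cal S}}cost(a,S)$ --- is indeed needed and holds by subadditivity of $d_a$ (a consequence of concavity and $d_a(0)=0$), so there is no gap.
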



\subsection{Discounted Spanning Tree} \label{spanning tree}
In this section, we study the discounted spanning tree problem and
establish a $\log n$ approximation algorithm for this problem.

Let us first consider a simple $O(\log^2n)$-approximation algorithm.
Observe that a spanning tree is an edge cover with the connectivity
requirement. If we apply the greedy edge cover algorithm in section
\ref{edge cover}, there is no guarantee that we will end up with a
connected edge cover. We may get a collection of connected
components. We can subsequently contract these components and run
the greedy edge cover algorithm again on the contracted graph. We
keep doing this until there is only one connected component. By this
method, we will get a connected edge cover, i.e. a spanning tree.

We now analyze the above algorithm. Let $OPT_{ST}$ be the price of
the minimum discounted spanning tree. After each execution of the
greedy edge cover algorithm, there is no isolated vertex, hence the
contraction decreases the number of vertices by at least a a factor of half,
therefore we will have to run the greedy edge cover algorithm
at most $O(\log n)$ times. Let $OPT^r_{EC}$ be the price of the
minimum edge cover for the graph obtained after the $r^{th}$
contraction and let $C_r$ be the edge cover that we produce for this
iteration. Using theorem \ref{edge cover thm}, the price of $C_r$ is
at most $(\log n)OPT^r_{EC}$. It is easy to see that $OPT^r_{EC}$ is
at most $OPT_{ST}$ for every $r$. Hence the price of $C_r$ is
bounded by $(\log n)OPT_{ST}$. Since there are at most $O(\log n)$
iterations, the price of the spanning tree produced by the above
algorithm is bounded by $O(\log^2 n)OPT_{ST}$.

We observe that greedy edge cover and contraction are two main steps
in the above algorithm. Intuitively, they are used to satisfy the
covering and connectivity requirements respectively. The algorithm
proceeds by alternately invoking these subroutines. Based on this
observation, our idea to get a $\log n$ approximation algorithm is
to apply the following \emph{adaptive greedy} algorithm: rather than
apply contraction after each complete execution of the greedy edge
cover, we interleave contraction with the iterations of the greedy
edge cover algorithm. After each iteration, we modify the graph and
motivate our algorithm to get a connected edge cover at the end.

Now we describe our adaptive greedy algorithm. For every agent $a$
and subset of vertices $S$ we define $cost(a,S)$ as the cost of the
optimal edge cover for $S$. We define the \textit{average cost} of a
set $(a,S)$ as $\alpha_a^S = cost(a,S)/|S|$. The algorithm proceeds
in phases and in each phase we have two steps, \textbf{search} and
\textbf{contraction}. In each phase $r$, during the search step we
find the set $(a_r,S_r)$ with the lowest average cost and set the
\emph{potential} of each vertex $v\in S_r$ as
$p(v)=\alpha_{a_r}^{S_r}$. The search step is followed by a
contraction step, where we modify the graph by contracting every
connected component in the induced subgraph of agent $a_r$'s optimal
edge cover for the set $S_r$. After this we begin the next phase.
The algorithm terminates when we have contracted the original graph
to a single vertex. For every agent, we find the set of all edges
assigned to her across all the search steps declare this as her bundle of assigned edges. 
Finally remove unnecessary edges from the set of assigned 
edges to get a spanning tree.

It is easy to see that we get a connected edge cover at the end of
the algorithm, which proves the correctness of the algorithm. To analyze the running time, we observe that 
there can be at most $n$ phases and by Lemma \ref{most
efficient subset}, each phase can be implemented in polynomial time. 
Hence the algorithm runs in polynomial time.

Next, we prove that the approximation factor of the algorithm is
$O(\log n)$. Let $OPT_{ST}$ be the price of the optimal solution of
the discounted spanning tree instance and let $\{T_a: a \in {\cal
A}\}$ be our solution. Let $V'$ be the set of contracted vertices we
produced during the algorithm. Number the elements of $V$ and $V'$
in the order in which they were covered by the algorithm, resolving
ties arbitrarily. Suppose $V=\{v_1,...,v_n\}$ and
$V'=\{z_1,...,z_{n'}\}$. Obviously, $n'\le n$.

It is easy to verify that $\sum_a d_a(T_a)\le \sum_i p(v_i)+\sum_j
p(z_j)$. Therefore we only need to bound the potentials of the
vertices in $V\cup V'$.

\begin{claim}
\label{potential}
$p(v_i)\le \frac{OPT_{ST}}{n-i+1}$ for any $i \in \{1 \cdots n\}$ and
$p(z_j)\le \frac{OPT_{ST}}{n'-j}$ for any $j \in \{1 \cdots n' \}$.
\end{claim}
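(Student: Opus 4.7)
My plan is to mirror the classical analysis of the greedy set cover algorithm, adapted to the adaptive contractions. The argument rests on two ingredients: a master inequality bounding the minimum average cost at each phase, and a counting argument lower-bounding $|V(G_r)|$ in terms of the coverage index.

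First I would establish the master inequality
\[\alpha_{a_r}^{S_r} \;\le\; \frac{OPT_{ST}}{|V(G_r)|}.\]
Fix an optimal agent-decomposition $\{T_a^*\}_{a \in {\cal A}}$ of a minimum discounted spanning tree of $G$, and let $\pi : V \to V(G_r)$ be the current contraction map. Pushing each $T_a^*$ through $\pi$ and discarding the self-loops gives an edge set $\tilde T_a^* \subseteq E(G_r)$ covering a vertex subset $\tilde S_a \subseteq V(G_r)$. Since $T^*$ spans $G$, its image spans $G_r$, so $\bigcup_a \tilde S_a = V(G_r)$. Concavity of each $d_a$ with $d_a(0)=0$ gives subadditivity, hence $cost(a,\tilde S_a) \le d_a(\sum_{e \in T_a^*} c_a(e))$ and $\sum_a cost(a,\tilde S_a) \le OPT_{ST}$. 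The averaging inequality (using $\sum_a |\tilde S_a| \ge |V(G_r)|$) then produces the master bound.

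Next I would lower-bound $|V(G_r)|$ at the phase in which each element is first covered. For $v_i$: by the design of the algorithm, the contracted components consist only of vertices in the chosen set $S_{r'}$ (the minimum edge cover uses only internal edges), so uncovered original vertices persist as standalone vertices of $G_r$. At the phase $r$ when $v_i$ is first covered, $v_i, v_{i+1}, \ldots, v_n$ are therefore all standalone in $G_r$, giving $|V(G_r)| \ge n-i+1$ and hence $p(v_i) \le OPT_{ST}/(n-i+1)$ via the master inequality.

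For $z_j$, an analogous counting argument must establish $|V(G_r)| \ge n'-j$ at the phase when $z_j$ is first covered. The natural identity $|V(G_r)| = (n - a_r) + C_r - (j-1)$, where $a_r$ counts the original vertices already covered and $C_r$ counts the contracted vertices already created, reduces this to showing $(n-a_r) + C_r \ge n'-1$. Combined with the budget observation that every contraction consumes at least two old vertices while producing at most one new $z$ (relating $C_r$ to the total reduction in the vertex count already incurred), one can push the bookkeeping through to the required inequality. The main technical obstacle is precisely this last step: the $v_i$ case works because uncovered original vertices never disappear from $G_r$, but the uncovered contracted vertices $z_{j+1}, \ldots, z_{n'}$ need not yet exist at phase $r$, and the ``expanding universe'' of future $z$'s must be reconciled with the shrinking vertex count of $G_r$ to isolate the quantity $n'-j$.
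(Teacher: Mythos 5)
Your master inequality and the bound for the original vertices $v_i$ coincide with the paper's argument: the paper compresses them into two sentences ("the optimal spanning tree can cover the vertices in $G^r$ by a price of $OPT_{ST}$, by our greedy choice\dots"), and your derivation via the contraction map $\pi$, monotonicity of the $d_a$, and averaging is exactly the intended justification. The genuine gap is the second half of the claim. For $p(z_j)$ you set up the bookkeeping, reduce the problem to an inequality of the form $(n-a_r)+C_r\ge n'-1$, and then state that "one can push the bookkeeping through" before immediately conceding that this is "the main technical obstacle." Naming the required inequality is not the same as proving it; as written, the bound $|V(G^r)|\ge n'-j$ --- and hence half of the statement --- is not established.

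For comparison, the paper closes this step with one counting observation: each of $z_{j+1},\dots,z_{n'}$ must either already be a vertex of $G^r$ or be produced by a contraction performed in phase $r$ or later; since every contraction replaces at least two current vertices by one new vertex and the process starting from $G^r$ terminates at a single vertex, at most $|V(G^r)|-1$ further contracted vertices can ever be created, from which the paper concludes $|V(G^r)|\ge n'-j$. Your instinct that this step is delicate is sound --- separating the $z_\ell$ already present in $G^r$ from those created later gives, by this accounting, only $|V(G^r)|\ge (n'-j)/2+1$, which weakens the stated bound by a constant factor but still yields $\sum_j p(z_j)=O(\log n')\,OPT_{ST}$ and hence the theorem. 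But whichever constant one settles for, the counting argument must actually be carried out, and your proposal identifies the difficulty without resolving it.
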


\begin{proof}
Refer to Appendix \ref{potential proof}.
\end{proof}

From the above claim, we have $$\sum_a d_a(T_a)\le \sum_{1\le i\le
n} \frac{OPT_{ST}}{n-i+1} +\sum_{1\le j\le n'}
\frac{OPT_{ST}}{n'-j}\le (\log n+\log n')OPT_{ST}\le O(\log
n)OPT_{ST}$$

Therefore we have the following:
\begin{theorem}
\label{spanning tree thm}
There is a polynomial time algorithm which finds an $O(\log
n)$-approximate solution to the discounted spanning tree
problem for any graph with $n$ vertices.
\end{theorem}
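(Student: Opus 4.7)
The plan is to analyze the adaptive greedy algorithm introduced above by establishing three properties: it halts with a valid spanning tree, it runs in polynomial time, and its output has total discounted price at most $O(\log n)\cdot OPT_{ST}$.

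For correctness, the algorithm terminates only when the contracted graph collapses to a single node, so the edges collected across all search steps connect every pair of original vertices; pruning cycles yields a spanning tree, and this pruning cannot increase any $d_a$ because $d_a$ is monotone. For the running time, each phase strictly reduces the number of super-vertices (the selected $C_{a_r}$ must merge at least two current components, otherwise it could not be an edge cover of $S_r$), so there are at most $n$ phases; by Lemma \ref{most efficient subset} each phase can be implemented in polynomial time.

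The approximation guarantee reduces to proving Claim \ref{potential}, which asserts $p(v_i)\le OPT_{ST}/(n-i+1)$ and an analogous bound for the contracted vertices. To establish the claim I would fix an optimal solution $\{T^*_a : a \in {\cal A}\}$ and, at the start of any phase, consider the image of each $T^*_a$ after applying the same sequence of contractions the algorithm has already performed. Because $\bigcup_a T^*_a$ is a spanning tree of the original graph, its projection remains a connected subgraph that touches every current super-vertex. Letting $S^*_a$ denote the set of currently uncovered super-vertices incident to the projected $T^*_a$, the family $\{(a,S^*_a)\}$ forms a feasible candidate whose total cost is at most $OPT_{ST}$ and whose combined support is at least the number of uncovered super-vertices. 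An averaging argument then yields one pair with average cost at most $OPT_{ST}$ divided by the number of uncovered super-vertices; since the algorithm always selects the pair of minimum average cost, the potential placed on every vertex covered in that phase is at most the same quantity. Ordering the vertices in the sequence in which they are swept gives the per-index bounds.

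Given Claim \ref{potential}, the theorem follows from the summation sketched in the text: the total discounted price is at most $\sum_i p(v_i)+\sum_j p(z_j)$, which the per-index bounds upper-bound by $O(\log n)\cdot OPT_{ST}$. The main obstacle I anticipate is making the projection-and-averaging argument precise, in particular correctly assigning contracted components to agents so that both the total cost remains at most $OPT_{ST}$ and the total support is at least the number of uncovered super-vertices; careful bookkeeping is needed because the same vertex can be touched by several of the $T^*_a$ and the algorithm's contractions have already restructured the graph in a way that the optimum did not anticipate.
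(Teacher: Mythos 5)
Your proposal follows essentially the same route as the paper: correctness and the $n$-phase polynomial-time bound are argued identically, and your projection-plus-averaging argument for Claim \ref{potential} is exactly the (tersely stated) reasoning in the paper's appendix, where the contracted image of the optimal tree supplies candidate pairs of total cost at most $OPT_{ST}$ covering all remaining super-vertices, so the greedy choice has average cost at most $OPT_{ST}$ over the number of uncovered vertices. Your version merely makes the averaging and bookkeeping explicit, which is a faithful filling-in rather than a different proof.
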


\subsection{Discounted Perfect Matching}\label{matching}
In this section we provide a $\log n$ approximate algorithm for
the discounted (minimum) perfect matching problem.

The main difficulty in tackling matching is that unlike spanning
tree or edge cover, the union of perfect matchings for two
overlapping subsets of vertices may not contain perfect matching
over their union. So a greedy algorithm which iteratively matches
unmatched vertices cannot be used to solve this problem.

Here we provide an adaptive greedy algorithm for solving
this problem. In our algorithm we successively match unmatched
 vertices. Once a vertex gets matched, it remains matched through the
 course of the algorithm. However unlike traditional greedy algorithms,
 we provide the algorithm the ability to change the underlying matching
 for previously matched vertices.  The algorithm is adaptive
 in the sense that it alters some of the decisions taken in
 previous phases to reduce the price incurred in the current phase.

Let us explain the algorithm in greater detail now. The algorithm
runs in phases and in each phase it augments the set of matched
vertices by adding unmatched vertices and rematching some of the
previously matched vertices. At the end of each phase let $Z$ denote
the set of matched vertices and let $M$ be the underlying matching
covering $Z$. For every $a \in$ \A \ we define $c_a^M: E \rightarrow
\Rplus$ which attains value $0$ for all edges in $M$ and takes the
value $c_a(e)$ for all other edges $e \notin M$. For any $F
\subseteq E$ let $V(F)$ be the set of vertices in the graph induced
by $F$. For a given matching $M$, we define the \textit{average cost} of a
set $(a,F) \in {\cal A} \times 2^E$ with respect to $M$ as
$\alpha_a^M(F) = \ {c_a^M(F)}/{|V(F)-Z|}$.

In the beginning of every phase we pick the element $(a,F)$ from
${\cal A} \times 2^E$ with the lowest average cost such that there
exists a perfect matching over $Z \cup V(F)$ using edges from $M
\cup F$. We defer details of this subroutine until the proof of
lemma \ref{matching lemma}. For each vertex $v \in V(F)-Z$ we define
its potential $p(v) = \alpha_a^M(F)$. Finally we augment the set $Z$
with vertices in $V(F)$ and update the underlying matching covering
it. The algorithm terminates when all vertices have been matched,
i.e. belong to $Z$.

We will now show that the algorithm can be implemented in polynomial
time. For this we will need to prove the following lemma.

\begin{lemma}
\label{matching lemma} Given a matching $M$ and $Z = V(M)$, we can
find, in polynomial time, the element $(a,F)$ from ${\cal A} \times
2^E$ with the lowest average cost with respect to $M$ such that
there exists a perfect matching over $Z \cup V(F)$ using edges from
$M \cup F$.
\end{lemma}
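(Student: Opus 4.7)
The plan is to reduce the problem to a polynomial number of minimum cost perfect matching computations. Iterating over $a \in \mathcal{A}$, I fix an agent and define $c'(e) = 0$ for $e \in M$ and $c'(e) = c_a(e)$ otherwise, so $c^M_a(F) = c'(F)$. Let $U := V \setminus Z$.

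The first, and most delicate, step is a structural reduction: without loss of generality $F = M' \setminus M$ for some perfect matching $M'$ on $Z \cup V(F)$. To see this, given any feasible $(a, F)$ with $M' \subseteq M \cup F$ a perfect matching on $Z \cup V(F)$, set $F^\star := M' \setminus M \subseteq F$. Clearly $c'(F^\star) \leq c'(F)$, and $V(F)\cap U = V(F^\star)\cap U$: any $u \in V(F)\cap U$ lies in $V(M') = Z \cup V(F)$, and since $M$ covers only $Z$ the $M'$-edge incident to $u$ must belong to $M' \setminus M = F^\star$. Replacing $F$ by $F^\star$ preserves the denominator $|V(F)-Z|$ and feasibility ($M' \subseteq M \cup F^\star$) while not increasing the numerator, so it only improves the ratio.

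Under this reduction the task becomes: find a matching $M'$ in $G$ with $Z \subseteq V(M')$ minimizing $c'(M')/|V(M')\cap U|$. I enumerate $d := |V(M')\cap U| \in \{2, 4, \ldots, |U|\}$ (necessarily even since $|Z|$ is even and the matching has $(|Z|+d)/2$ edges) and for each $d$ find the $M'_d$ minimizing $c'(M')$ subject to $Z \subseteq V(M')$ and $|V(M')\cap U| = d$. This is solved by a standard augmentation: construct an auxiliary graph by adding $|U|-d$ dummy vertices, each joined to every $u \in U$ by a zero-cost edge. In any perfect matching of the augmented graph the dummies are matched into $U$, so exactly $|U|-d$ vertices of $U$ go to dummies, leaving $d$ vertices of $U$ and all of $Z$ to be matched within $G$. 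The augmented graph has $|Z|+2|U|-d$ vertices (an even number) and $G$ is complete, so a minimum cost perfect matching exists and is computed in polynomial time by classical algorithms. Running this over all $k$ agents and $O(|U|)$ choices of $d$, I return the pair $(a, M'_d \setminus M)$ attaining the smallest ratio.

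The main obstacle is the structural reduction; a priori $F$ could contain many redundant edges artificially inflating $|V(F)-Z|$, but the identity $V(F)\cap U = V(M'\setminus M)\cap U$, a consequence of $V(M) = Z$ and $V(M') = Z \cup V(F)$, rules this out and allows us to work with matchings directly, reducing the remaining task to a routine enumeration over $d$.
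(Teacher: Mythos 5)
Your proof is correct and follows essentially the same route as the paper's: fix the agent and the number $t=|V(F)-Z|$ of newly saturated vertices by enumeration, add dummy vertices joined by zero-cost edges to the unmatched vertices, and reduce to a minimum-cost perfect matching computation in the augmented graph. The only difference is that you spell out the structural reduction to $F=M'\setminus M$ (which the paper dismisses with ``it is easy to check''), and that justification is sound.
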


\begin{proof}
Refer to Appendix \ref{matching lemma proof}.
\end{proof}

To analyze the approximation factor for the algorithm, we number the
vertices in $V$ in the order in which they are added to $Z$ and
resolve ties arbitrarily. Let $V=\{v_1, v_2 \cdots v_n\}$. Let $OPT$
be the price of the optimal solution. We have the following lemma to
bound the potential of every vertex.

\begin{lemma}
\label{setcover lemma} For each $i \in \left\{ 1,2 \cdots n
\right\},\ p(v_i) \leq OPT/(n-i+1)$.
\end{lemma}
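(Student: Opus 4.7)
The plan is to show that at the start of the phase adding $v_i$, some feasible pair $(a, F) \in {\cal A} \times 2^E$ has $\alpha_a^M(F) \leq OPT/(n - i + 1)$; since the algorithm always picks the feasible pair of minimum average cost, this will force $p(v_i) \leq OPT/(n - i + 1)$. At that moment $|Z| \leq i - 1$, so $|V \setminus Z| \geq n - i + 1$. Fix an optimal perfect matching $M^* = \bigsqcup_{a \in {\cal A}} M^*_a$, for which $OPT = \sum_a d_a(c_a(M^*_a))$; since the $M^*_a$'s are vertex-disjoint and $V(M^*) = V$, one has $\sum_a |V(M^*_a) \setminus Z| = |V \setminus Z|$.

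For each agent $a$ I would construct a feasible candidate $(a, F_a)$ with $F_a \subseteq M^*_a$ and then apply an averaging argument. The natural choice is to take $F_a$ to consist of the $M^*_a$-edges on those alternating paths of $M \cup M^*_a$ whose two endpoints both lie in $V(M^*_a) \setminus Z$ (together with any $M^*_a$-edges that already agree with $M$, which are free in $c_a^M$). Each such path can be matched by its $M^*_a$-edges while the remainder of $Z$ stays matched by $M$, so $M \cup F_a$ contains a perfect matching on $Z \cup V(F_a)$ and $(a, F_a)$ is feasible in the sense of Lemma~\ref{matching lemma}. Monotonicity of $d_a$ together with $F_a \setminus M \subseteq M^*_a$ then gives the aggregate discounted-cost bound $\sum_a d_a\bigl(\sum_{e \in F_a \setminus M} c_a(e)\bigr) \leq \sum_a d_a(c_a(M^*_a)) = OPT$. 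Provided also that $\sum_a |V(F_a) \setminus Z| \geq |V \setminus Z|$, a standard averaging step produces an agent $a$ with $\alpha_a^M(F_a) \leq OPT/(n - i + 1)$ and the lemma follows.

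The step I expect to be the main obstacle is the coverage bound, because a vertex $w \in V(M^*_a) \setminus Z$ can be the endpoint of a path of $M \cup M^*_a$ whose other endpoint sits in $Z \setminus V(M^*_a)$, in which case $w$ fails to be in $V(F_a)$. Such an orphaned $w$ is still a degree-one vertex of $M \triangle M^*$ and therefore lies on an augmenting path $P^*$ of $M \triangle M^*$ whose other endpoint $w' \in V \setminus Z$ is another orphan. For any agent $b$ the pair $(b, P^* \cap M^*)$ is feasible, since toggling $M$ along $P^*$ yields a perfect matching on $Z \cup \{w, w'\}$ drawn from $M \cup (P^* \cap M^*)$. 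I would adjoin one such path-candidate for every orphan pair and choose the representing agent $b$ so that the charge lands on the $M^*$-edges of $P^*$ inside the OPT budget. The delicate part of the argument is the bookkeeping: to conclude that the combined collection of per-agent $F_a$'s and orphan path-candidates has total discounted cost at most $OPT$ while covering every vertex of $V \setminus Z$, one must certify that no $M^*$-edge is charged twice, and this is the place where the real care will be needed.
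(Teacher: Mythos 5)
Your high-level strategy is the same one the paper intends: the paper's entire proof is the sentence ``Similar to the proof of Claim~\ref{potential},'' i.e.\ an averaging argument in which the optimal solution is decomposed into feasible candidates that together cover the $\geq n-i+1$ unmatched vertices at total cost at most $OPT$, so that the greedy choice is at least as good as the best of them. You correctly set this up (the bound $|V\setminus Z|\ge n-i+1$, the disjointness of the $M^*_a$, the monotonicity step), and you correctly identify the point where the matching setting genuinely differs from the spanning-tree setting: $(a,M^*_a)$ need not be feasible, because a component of $M\cup M^*_a$ can be an odd path with one endpoint in $Z\setminus V(M^*_a)$ and the other in $V(M^*_a)\setminus Z$.

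However, the proposal does not close the argument, and the patch you sketch for the orphaned vertices has a concrete hole. An orphan $w$ lies on a path component $P^*$ of $M\oplus M^*$ whose $M^*$-edges may be owned by \emph{several different} agents in the optimal solution. Your candidate $(b,P^*\cap M^*)$ prices \emph{all} of those edges under the single cost function $c_b$, i.e.\ its cost is $\sum_{e\in (P^*\cap M^*)\setminus M} c_b(e)$, and for an edge $e\in M^*_{a}$ with $a\neq b$ there is no relation whatsoever between $c_b(e)$ and the amount $c_{a}(e)$ that $OPT$ pays for it; $c_b(e)$ can be arbitrarily large. So there is in general no choice of ``representing agent $b$'' for which this candidate's cost ``lands inside the $OPT$ budget,'' and the double-charging bookkeeping you defer is not the only problem. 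Since covering $w$ and $w'$ within the feasibility constraint essentially forces toggling the whole alternating path (or introducing edges outside $M^*$, which cannot be charged to $OPT$ at all), this is exactly the step that needs a real idea, and the proof as written does not supply it. To be fair, the paper's one-line proof does not address this either; but as a self-contained argument your proposal has a genuine gap at precisely the point you flag.
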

\begin{proof}
Similar to the proof of Claim \ref{potential}.
\end{proof}

Since the price of our solution is at most the sum of the potentials for the vertices 
the following theorem follows immediately from lemma \ref{setcover lemma} 
\begin{theorem}
There exists a $\log n$ approximate algorithm for the discounted perfect matching problem.
\end{theorem}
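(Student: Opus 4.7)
\begin{proofof}{Proof Proposal}
The plan is to string together the two ingredients the paper has just set up: the potential-charging scheme built into the adaptive greedy algorithm, and the per-vertex bound on potentials from Lemma \ref{setcover lemma}. Concretely, I would first establish the inequality
$$\sum_{a \in {\cal A}} d_a\Big(\sum_{e \in M_a} c_a(e)\Big)\ \le\ \sum_{i=1}^{n} p(v_i),$$
where $M_a$ denotes the edges assigned to agent $a$ in the final perfect matching. This reduces the theorem to bounding the right-hand side, which Lemma \ref{setcover lemma} and a harmonic-sum computation handle immediately.

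To justify the displayed inequality, I would argue in two steps. First, property (3) of a discounted price function gives $d_a(x) \le x$, so
$$\sum_a d_a\Big(\sum_{e\in M_a} c_a(e)\Big)\ \le\ \sum_a \sum_{e\in M_a} c_a(e),$$
i.e.\ it suffices to bound the undiscounted total cost of the final matching by $\sum_i p(v_i)$. Second, I would track how cost is introduced during the algorithm. In any phase with current matching $M$, current matched set $Z=V(M)$, selected pair $(a,F)$, the only new edges that can enter the matching are those in $F\setminus M$, each contributing cost $c_a(e)$ to agent $a$; the total new cost created in the phase is therefore at most $c_a^M(F)$. But by the definition of potentials, the vertices in $V(F)\setminus Z$ each receive potential $\alpha_a^M(F)=c_a^M(F)/|V(F)-Z|$, so the potentials assigned in this phase sum to exactly $c_a^M(F)$. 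Since each vertex of $V$ is assigned a potential at most once (when it first enters $Z$) and subsequent phases can only remove edges from the matching without adding new cost at old vertices, the total cost of the final matching is dominated by $\sum_{i=1}^n p(v_i)$.

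With the inequality in hand, I would finish by invoking Lemma \ref{setcover lemma}:
$$\sum_{i=1}^{n} p(v_i)\ \le\ \sum_{i=1}^{n}\frac{OPT}{n-i+1}\ =\ OPT\cdot H_n\ =\ O(\log n)\cdot OPT,$$
and combine this with the previous display to get the claimed approximation factor. The running time is polynomial because there are at most $n/2$ phases (each adds at least two vertices to $Z$) and each phase invokes the subroutine of Lemma \ref{matching lemma}.

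The main subtlety, and the only place I expect any real care is needed, is in the bookkeeping that justifies treating $\sum_i p(v_i)$ as an upper bound on the final matching cost in the presence of \emph{re-matching}: previously matched vertices can have their incident edges swapped out in later phases. The argument above sidesteps this by observing that such swaps can only delete old edges (and the cost they carried) and create new ones whose cost is already fully accounted for by the potentials assigned in their own phase; I would make this explicit by a one-line ledger that equates, for each phase, the potential mass assigned to newly matched vertices with the incremental cost introduced into the matching at that phase.
\end{proofof}
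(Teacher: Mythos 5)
Your proposal is correct and follows essentially the same route as the paper: the paper's proof is exactly the assertion that the solution's price is bounded by $\sum_i p(v_i)$ combined with Lemma \ref{setcover lemma} and the harmonic sum, and you have simply made explicit the charging argument (using $d_a(x)\le x$ and the fact that the potentials assigned in a phase sum to $c_a^M(F)$) that the paper leaves implicit.
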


\subsection{Shortest s-t Path}
In this section we consider the discounted shortest path problem
between two given vertices $s$ and $t$. We provide a $\log n$
approximation algorithm for this problem.

Unlike any of the previous problems, it seems to difficult to design a greedy
algorithm for the discounted shortest path problem. However using ideas
developed in section \ref{matching} and by a factor preserving
reduction we obtain a $\log n$ approximate algorithm this problem. Our reduction
also implies that the shortest path problem is easier than the
minimum weight perfect matching problem, and this technique can be
applied to very general models whereby any algorithm for perfect
matching problem can be used to solve the shortest path problem
while preserving the approximation factor.

First of all, we describe our reduction:
\begin{lemma}\label{reduction}
Let $A$ be a $\beta$-approximate algorithm for the perfect
matching problem, then we can get a $\beta$-approximation for the
shortest path problem using $A$ as a subroutine
\end{lemma}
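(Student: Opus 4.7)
The plan is to give a cost-preserving reduction from the discounted shortest-path instance $I_{SP}$ on $G=(V,E)$ to a discounted perfect matching instance $I_{PM}$ on an auxiliary graph $G'$ with two properties: (a) $OPT_{PM}\le OPT_{SP}$, and (b) any perfect matching of $I_{PM}$ can be converted in polynomial time into an $s$-$t$ path of $I_{SP}$ of no larger discounted price. These together imply that running $A$ on $I_{PM}$ yields a matching of price at most $\beta\cdot OPT_{PM}\le\beta\cdot OPT_{SP}$, which we then decode into an $s$-$t$ path of price at most $\beta\cdot OPT_{SP}$.

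To build $G'$ I apply a vertex-splitting gadget: let $V'=\{s,t\}\cup\{v_1,v_2 : v\in V\setminus\{s,t\}\}$, which has the even cardinality $2n-2$. For each internal $v$, add a ``dummy'' edge $(v_1,v_2)$ with cost $0$ for every agent. For each edge $(u,v)\in E$ with $u,v$ both internal, add both cross edges $(u_1,v_2)$ and $(u_2,v_1)$, each of cost $c_a(u,v)$ for agent $a$. Add $(s,v_1)$ of cost $c_a(s,v)$, $(v_2,t)$ of cost $c_a(v,t)$, and $(s,t)$ of cost $c_a(s,t)$ (to handle the degenerate length-$1$ path). The agent set and the discount functions $d_a$ are copied verbatim from $I_{SP}$, and any residual edges needed to complete the instance are given prohibitively large cost so no $\beta$-approximate solution uses them.

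For (a): given any $s$-$t$ path $P^{*}=s\ra u^{(1)}\ra\cdots\ra u^{(\ell)}\ra t$ with agent partition $\{P^{*}_a\}$, lift it to a matching $M^{*}$ by taking $(s,u^{(1)}_1)$, $(u^{(i)}_2,u^{(i+1)}_1)$ for $1\le i<\ell$, and $(u^{(\ell)}_2,t)$ with inherited agent labels, completing $M^{*}$ with the dummy edges $(v_1,v_2)$ for every off-path $v$; each agent's total cost and hence discounted price is preserved exactly. For (b): given a matching $M=\bigsqcup_a M_a$ of $G'$, collapse $G'$ by identifying $v_1$ with $v_2$ for every internal $v$. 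A degree count shows that $s$ and $t$ have degree $1$ in the collapsed multigraph $H$ and every other vertex has degree $0$ or $2$, so $H$ decomposes into an $s$-$t$ walk, disjoint cycles and isolated vertices. Extract and shortcut the walk to a simple $s$-$t$ path $P$, and label each edge of $P$ by the agent of its preimage in $M$. Since shortcutting and cycle removal only drop non-negative-cost edges, $\sum_{e\in P_a}c_a(e)\le\sum_{e\in M_a}c_a^{PM}(e)$ for every agent $a$, and monotonicity of each $d_a$ together with summation over $a$ then yields $\mathrm{price}(P)\le\mathrm{price}(M)$. The only delicate point is this degree/shortcutting analysis, which follows cleanly from the rigid structure that vertex-splitting forces: each internal $v$ has exactly two copies, each matched once in $M$, so contributes either a loop (dummy) or exactly two incidences to $H$.
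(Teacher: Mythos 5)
Your proposal is correct and follows the same overall strategy as the paper -- reduce to discounted perfect matching on a vertex-split auxiliary graph, lift an optimal path to a matching of equal price, and decode any matching back to a path -- but the details differ in two ways worth noting. First, your gadget is simpler: the paper replaces each edge $uv$ by a multi-vertex gadget (with two copies of the cost and internal vertices that force ``one or two gadget edges'' into any perfect matching), whereas you use only the two cross edges $(u_1,v_2)$ and $(u_2,v_1)$ plus the zero-cost dummy edge $(v_1,v_2)$; this suffices because both orientations of each edge are available, so any simple $s$-$t$ path lifts cleanly. Second, and more substantively, your decoding analysis is more robust than the paper's. The paper argues that the extra circuits in the decoded edge set ``must have cost zero, because otherwise the matching could be improved'' -- an argument that is only valid for an \emph{optimal} matching, while the matching returned by $A$ is merely $\beta$-approximate; the paper also claims a cost-preserving bijection, which is stronger than what holds. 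You avoid both issues by simply discarding the cycles and invoking nonnegativity of costs and monotonicity of each $d_a$ to get $\price(P)\le\price(M)$, which is exactly the one-sided inequality the approximation argument needs. One small caveat: your remark that prohibitively expensive filler edges are never used by a $\beta$-approximate solution is not automatic, since a concave $d_a$ may saturate and mute arbitrarily large costs; fortunately your argument does not actually need that claim, because the decoding inequality already holds as long as each filler edge $(u_i,v_j)$ costs at least $c_a(u,v)$ for every agent $a$.
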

\begin{proof}
Suppose we are given a graph $G=(V,E)$. Construct an auxiliary graph
$\bar{G}$ in the following way: Replace every vertex $v \in V$ by
$v'$ and $v''$ and add an edge connecting them. The cost of this
edge is zero for every agent. For every edge $uv \in E$ where $u, v
\notin \left\{s, t\right\}$ we replace $uv$ with the gadget shown in
figure \ref{gad1}.

\begin{figure}[htbp]
   \centering
        \includegraphics[width=0.23\textwidth]{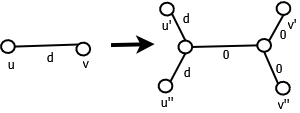}
        \caption{Gadget for edges not incident on $s$ or $t$}
    \label{gad1}
\end{figure}

If $d = c_a(uv)$ was the cost of the edge $uv$ for agent $a \in {\cal
A}$ then two of the edges in our gadget have the same cost as shown
above. Rest of the edges have zero cost for all agents. Replace any
edge $su \in E$ with the gadget shown in figure \ref{gad2}. Do the
same for any edge
$ut$.

\begin{figure}[htbp]
    \centering
        \includegraphics[width=0.30\textwidth]{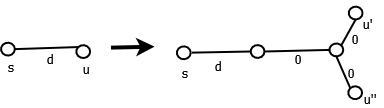}
        \caption{Gadget for edges incident on $s$ or $t$}
    \label{gad2}
\end{figure}

On this graph $\bar{G}$, use the algorithm $A$ to get the minimum
cost matching. Let $M$ be the matching returned. We can interpret
$M$ as a $s-t$ path in $G$ in the following way. Let $g(uv)$ be the edges in
$\bar{G}$ corresponding to the edge $uv$ for the gadget shown in
figure \ref{gad1}. Observe that either one or two edges of every
such gadget must belong to $M$. Let $S$ be the set of edges in $G$
such that two edges in their corresponding gadget belong to $M$. One
can check that every vertex in $V$ is incident with zero or two edges
from $S$, whereas $s$ and $t$ are each incident with exactly
one edge in $S$. Therefore $S$ consists of an $s-t$ path $P_S$ and
some other circuits. Now the circuits in $S$ must have cost zero.
This is because if a circuit has positive cost then the cost of the
matching can be reduced further by pairing up the vertices in the
circuit as shown in figure \ref{circuitFig}.

\begin{figure}[htbp]
    \centering
        \includegraphics[width=0.50\textwidth]{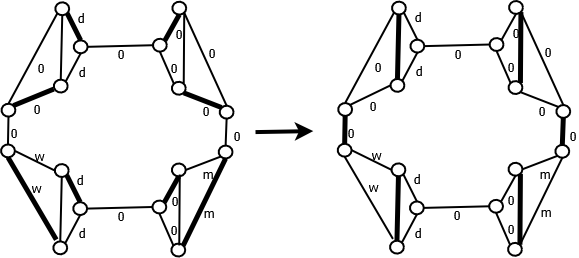}
        \caption{Circuits not involving edges in $S$ should have zero costs}
    \label{circuitFig}
\end{figure}

Note that this defines a cost preserving bijection between $s-t$
paths in $G$ to perfect matchings in $\bar{G}$.
\end{proof}

Using the algorithm from section \ref{matching}, by Lemma
\ref{reduction}, we have:

\begin{theorem}
There exists a $\log n$ approximate algorithm for the discounted
shortest path problem.
\end{theorem}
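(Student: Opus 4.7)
The plan is to invoke Lemma~\ref{reduction} together with the $\log n$-approximation algorithm for discounted perfect matching from Section~\ref{matching}. Given a discounted shortest path instance on $G=(V,E)$, I would build the auxiliary graph $\bar{G}$ as described in the proof of Lemma~\ref{reduction}, lifting each agent's cost function $c_a$ onto the edges of $\bar{G}$ via the gadgets (two gadget edges inherit $c_a(uv)$, the remainder are zero-cost for every agent). Each agent's discounted price function $d_a$ is carried over unchanged to the matching instance. I would then run the algorithm of Section~\ref{matching} on this instance and recover an $s$-$t$ path from the returned matching $M$ by the procedure in the proof.

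The key verification is that the bijection from the proof of Lemma~\ref{reduction} is \emph{per-agent cost preserving}, not merely cost preserving in aggregate. Concretely, if $M$ is the matching produced and $P$ the corresponding $s$-$t$ path (plus possibly some zero-cost circuits which, as argued in the proof, we may eliminate without increasing cost), then for every agent $a$ the sum $\sum_{e\in M_a} c_a(e)$ over the edges $M_a$ assigned to $a$ equals $\sum_{e\in P_a} c_a(e)$ over the corresponding path edges $P_a$ in $G$. Because each $d_a$ is applied to this sum and the sum is invariant under the bijection, we get
\[
\sum_{a\in {\cal A}} d_a\!\left(\sum_{e\in M_a} c_a(e)\right) \;=\; \sum_{a\in {\cal A}} d_a\!\left(\sum_{e\in P_a} c_a(e)\right).
\]
The same identity holds in the reverse direction for any $s$-$t$ path lifted to a perfect matching by selecting the two cost-carrying edges in each traversed gadget and the zero-cost edge in each non-traversed gadget. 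Consequently, optimal discounted prices on the two instances are equal, and any $\beta$-approximate matching yields a $\beta$-approximate path.

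Combining this with Theorem of Section~\ref{matching}, which gives a $\log n$-approximation for discounted perfect matching, yields a $\log \bar{n}$-approximation for discounted shortest path, where $\bar{n}=|V(\bar{G})|$. Since each gadget introduces only a constant number of vertices, $\bar{n}=O(n)$, so $\log\bar{n}=O(\log n)$, completing the proof. The only step requiring care is the per-agent cost-preservation check above; once that is established, the result is immediate from Lemma~\ref{reduction} and the matching theorem, with no further obstacles.
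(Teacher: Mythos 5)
Your proposal is correct and follows essentially the same route as the paper: apply Lemma~\ref{reduction} with the discounted perfect matching algorithm of Section~\ref{matching}, and your explicit check that the bijection preserves costs \emph{per agent} (so that each $d_a$ is applied to an unchanged sum) is exactly the point that makes the reduction factor-preserving in the discounted multi-agent setting. One small inaccuracy: since a gadget is introduced for each of the $O(n^2)$ edges of the complete graph, $\bar{n}=O(n^2)$ rather than $O(n)$, but this still gives $\log\bar{n}=O(\log n)$ and does not affect the conclusion.
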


\section{Hardness of Approximation}
\label{lowerBounds}
In this section we present hardness of approximation results for the
problems defined earlier. Unlike some of the previous work on
combinatorial optimization \cite{GKTW,Zoya} over non-linear cost
functions, the bounds presented here are not information theoretic
but are contingent on $NP=DTIME(n^{O(\log\log n)})$. It should be noted that
the discount functions for the agents are the only part of the input
that is not presented explicitly. However it is well known
that concave functions that accessible through
an oracle can be closely approximated by a piecewise linear function
using \textit{few} queries to the oracle. So, we are not handicapped
by the lack of information from solving these problems optimally.

To show the hardness of approximation of the problems stated earlier
we consider the following general problem and use a reduction from
set cover to establish its hardness of approximation.

\textbf{Discounted Reverse Auction:} We are given a a set $E$ of $n$
items and a set \A \ of agents each of whom specifies a function $c_a:
E \rightarrow \Rplus$. Here $c_a(e)$ is the cost for procuring item
$e$ from agent $a$. Each agent also specifies a discounted price
function given by $d_a: \Rplus \rightarrow \Rplus$. The task is to 
find a partition ${\cal P} = \{P_1 \cdots P_k\}$ of $E$ such that 
$\sum_{a \in {\cal A}}d_a(\sum_{e\in P_a}c_a(e))$ is minimized.

\begin{lemma}
\label{hardness lemma}
It is hard to approximate the discounted reverse auction problem
within factor $(1-o(1))\log n$ unless $NP=DTIME(n^{O(\log\log n)})$.
\end{lemma}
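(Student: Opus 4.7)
The plan is to give a factor-preserving reduction from the set cover problem, invoking the classical hardness result that set cover admits no $(1-o(1))\log n$-approximation unless $NP = DTIME(n^{O(\log\log n)})$. Given a set cover instance with universe $U$ of $n$ elements and sets $S_1, \dots, S_m$, I would construct a discounted reverse auction with item set $E = U$ and one agent $a_i$ per set $S_i$. The cost function would be $c_{a_i}(e) = 1$ if $e \in S_i$ and $c_{a_i}(e) = L$ otherwise, for a large parameter $L$ to be chosen.

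The discount functions are the critical design choice: they must be engineered so that the DRA objective essentially counts the number of agents used, while still being valid discounted price functions. I would take the piecewise-linear function $d_{a_i}(x) = x$ for $x \le 1$ and $d_{a_i}(x) = 1 + \epsilon(x-1)$ for $x > 1$, for a small parameter $\epsilon > 0$. One checks that $d_{a_i}(0)=0$, that $d_{a_i}$ is non-decreasing with slopes $1$ then $\epsilon \le 1$, that $d_{a_i}(x) \le x$ throughout, and that the non-increasing slope makes $d_{a_i}$ concave. Under this choice, any \emph{legal} partition (each $a_i$ receives only items from $S_i$) that uses $k$ nonempty parts incurs total cost $k(1-\epsilon) + \epsilon n$, since every nonempty $P_{a_i}$ contributes $1 + \epsilon(|P_{a_i}|-1)$ and the $|P_{a_i}|$'s sum to $n$. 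This is minimized at $k = k^*$, the optimal set cover size, so $\mathrm{OPT}_{\mathrm{DRA}} = k^*(1-\epsilon) + \epsilon n$.

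The main obstacle is calibrating the two parameters simultaneously. $L$ must be large enough that any \emph{illegal} partition (an agent $a_i$ receiving some $e \notin S_i$) pays at least $1 + \epsilon(L-1)$, and this must exceed $(\log n)\cdot \mathrm{OPT}_{\mathrm{DRA}}$ so that no $O(\log n)$-approximation can afford such a partition. At the same time $\epsilon$ must be small enough that the additive $\epsilon n$ slack in $\mathrm{OPT}_{\mathrm{DRA}}$ does not dilute the approximation factor: an $\alpha$-approximation returning a legal partition with $\hat k$ nonempty parts satisfies $\hat k(1-\epsilon) + \epsilon n \le \alpha(k^*(1-\epsilon) + \epsilon n)$, which rearranges to $\hat k \le \alpha k^* + (\alpha-1)\epsilon n/(1-\epsilon)$. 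Choosing $\epsilon = 1/(n^2 \log n)$ makes the error term $o(1)$, and integrality of $\hat k$ then gives $\hat k \le \alpha k^*$; choosing $L = n^{10}$ comfortably dominates $(\log n)\cdot \mathrm{OPT}_{\mathrm{DRA}} = O(n\log n)$.

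Putting the pieces together, a $(1-o(1))\log n$-approximation for the discounted reverse auction would yield a $(1-o(1))\log n$-approximation for set cover, which by the classical hardness result would imply $NP = DTIME(n^{O(\log\log n)})$. Since $L$ and $1/\epsilon$ are polynomial in $n$, the reduction is polynomial-time and the piecewise-linear discount functions are succinctly representable, so the construction is legitimate within the DRA input model.
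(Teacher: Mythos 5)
Your proposal is correct and follows essentially the same route as the paper's proof: a factor-preserving reduction from set cover with one agent per set and a discount function that is linear up to the price of a single item and nearly flat thereafter, so that the objective effectively counts the number of sets used. Your version merely makes explicit the parameter choices (finite $L$ in place of infinite costs, a concrete $\epsilon$) and the quantitative loss that the paper leaves informal.
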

\begin{proof}
Refer to Appendix \ref{hardness lemma proof}
\end{proof}

This reduction can be extended to other combinatorial problems in
this setting to give logarithmic hardness of approximation for many
combinatorial problems. This can be achieved by considering an
instance of the problem where we have just one combinatorial object
and our task is the allocate it optimally among the agents. For
example for the discounted spanning tree problem we consider the
instance when the input graph is itself a tree and we have to
optimally allocate its edges among the agents to minimize the total
price. Thus, we have the following:

\begin{theorem}
\label{edge cover lower bound} It is hard to approximate any of the
problems of the discounted edge cover,spanning tree,perfect matching
and shortest path can be approximated within factor
$(1-o(1))\log n$ on a graph over $n$ vertices unless $NP=DTIME(n^{O(\log\log n)})$.
\end{theorem}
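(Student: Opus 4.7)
The plan is to derive the hardness of each of the four combinatorial problems from Lemma \ref{hardness lemma} via a factor-preserving reduction, following the outline given immediately before the theorem. Given an instance of discounted reverse auction on items $e_1,\ldots,e_m$ with agents ${\cal A}$, cost functions $c_a$, and discount functions $d_a$, I would construct a graph on $n = \Theta(m)$ vertices whose combinatorial structure is essentially unique and whose edges are in bijection with the items, so that the combinatorial problem reduces to optimally partitioning these $m$ edges among the agents.

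For discounted spanning tree, take a path on $m+1$ vertices; the path is the unique spanning tree of itself. For both discounted perfect matching and discounted edge cover, take a disjoint union of $m$ edges on $2m$ vertices: the unique perfect matching is the graph itself, and since every vertex has degree one and every $d_a$ is increasing, the optimum edge cover is also the graph itself. For discounted shortest path, take a single $s$-$t$ path of length $m$. In all four cases every feasible solution is exactly a partition of $\{e_1,\ldots,e_m\}$ among ${\cal A}$ with the same discounted price as in the reverse auction, so the two optima coincide and the $(1-o(1))\log m = (1-o(1))\log n$ inapproximability of reverse auction transfers verbatim.

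The main obstacle is reconciling this with the paper's convention that the input graph is complete. I would handle it by completing each skeleton above to the complete graph on its vertex set using uniformly priced ``filler'' edges: every non-skeleton edge is given the same cost $M$ for every agent, with $M$ chosen large enough that $d_a(M) > (1-o(1))(\log n)\cdot OPT_{RA}$ for every $a \in {\cal A}$. Any solution that uses even one non-skeleton edge then assigns some agent an edge bundle of raw cost at least $M$, whose discounted price alone exceeds the target and so cannot be within a $(1-o(1))\log n$ factor of the skeleton optimum. Choosing such an $M$ requires that every $d_a$ be unbounded on $\Rplus$, which can be arranged directly inside the Lemma \ref{hardness lemma} construction, for example by extending any capped discount function linearly beyond the threshold that already witnesses the set-cover gap, at negligible cost to the reduction.
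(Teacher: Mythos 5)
Your proposal takes essentially the same route as the paper: the paper also derives this theorem from Lemma \ref{hardness lemma} by instantiating each problem on a graph that is itself the unique combinatorial object (e.g., the input graph is a tree for discounted spanning tree), so that the problem degenerates into optimally allocating a fixed edge set among the agents, i.e., a discounted reverse auction. Your extra step of padding each skeleton to a complete graph with prohibitively expensive filler edges (and noting that the discount functions in the reduction can be made unbounded) correctly handles a detail the paper silently glosses over.
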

\bibliography{discounted}
\bibliographystyle{plain}

\appendix
\subsection{Proof of Claim \ref{d edge cover}}
\label{d edge cover proof}
To find the desired set we construct a graph $G'=(V',E')$ as
follows: Add a set $X \cup Y$ to the set of vertices in $G$, where
$|X| = |Q|$ and $|Y| = n-d$. Match every vertex $X$ to a vertex $Q$
with an edge of cost $0$. Connect each vertex in $Y$ to each vertex in
$V$ by an edge of very large cost. Set the cost of each edge $e\in
E$ as $c_a(e)$. Find the minimum cost edge cover in $G'$. Let $S^*$
be such a cover. It is easy to verify that $S^*$ is the desired set.

\subsection{Proof of Lemma \ref{matching lemma}}
\label{matching lemma proof}
We may assume that the agent $a$, and the size of $|V(F)-Z|$ in the
formula for $\alpha_a^M(F)$ are fixed by iterating over all values
of these parameters. So the above statement may be viewed as one
concerning a single agent for a fixed value of $|V(F)-Z|$. Let this
agent be $a \in {\cal A}$ and let $t= |V(F)-Z|$. i.e. We wish to
find the minimum discounted weight matching which saturates $t$ new
vertices(not in $Z$) and keeps the vertices in $Z$ saturated. We
create an auxiliary graph $G'$ by adding $n - (|Z| + t)$ vertices to
$V$. We add edges connecting each of the newly added vertex with
each vertex in $V-Z$. Let $E'$ be the set of newly added edges. We
extend $c_a^M$ to $\hat{c}_a^M$ by setting the cost of edges in $E'$
to be $0$. Finally we find the minimum weight perfect matching in
$G'$ under the cost function $\hat{c}_a^M$. Let $M'$ be this
matching. It is easy to check that $M'-E'$ is the desired matching
saturating $t$ new vertices.

\subsection{Proof of Claim \ref{potential}}
\label{potential proof}
For $i \in \{1 \cdots n \}$, suppose $v_i$ is covered in phase $r$. Let $G^r$
be the underlying graph at the beginning of phase $r$. Since
$v_i,v_{i+1},...,v_n$ are not covered before phase $r$, $G^r$
contains at least $n-i+1$ vertices. Since the optimal spanning tree
can cover the vertices in $G^r$ by a price of $OPT_{ST}$, by our
greedy choice, $p(v_i)\le OPT_{ST}/(n-i+1)$.

Similarly, let $1\le j\le n'$ and assume $z_j$ is covered in phase
$r$. Since we should be able to produce $z_{j+1},z_{j+2},...,z_{n'}$
from contraction on vertices of $G^r$, there are at least $n'-r$
vertices in $G^r$. Therefore we have $p(z_j)\le OPT_{ST}/(n'-j)$.

\subsection{Proof of Lemma \ref{hardness lemma}}
\label{hardness lemma proof}
We reduce set cover to the discounted reverse auction problem to prove this result.

Consider an instance ${\cal I} = (U, C,w)$ of set cover where we
wish to cover all elements in the universe $U$ using sets from $C$
to minimize the sum of weights under the weight function $w: U
\rightarrow \Rplus$. We define an instance, ${\cal I'}$ of our
discounted reverse auction problem corresponding to ${\cal I}$ in
the following way. Let $U$ be the set of items. For every set $S \in
C$ define an agent $a_S$, whose cost function $c_a$ assigns the
value $w(S)$ for every element $s \in S$ and sets the cost of all
other elements in $U$ to be infinity. The discount price function
for the agent is shown in figure \ref{redFig}. Here the slope of the
second segment is small enough.

\begin{figure}[htbp]
    \centering
        \includegraphics[width=0.30\textwidth]{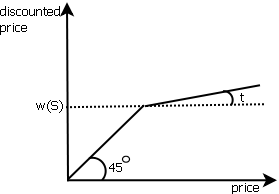}
        \caption{Discount function for agent corresponding to set $S$}
    \label{redFig}
\end{figure}

Consider a solution for ${\cal I'}$ where we procure at least one
item from agent $a_S$; then we can buy all elements in $S$ from
$a_S$ without a significant increase in our payment. So the cost of
the optimal solution to ${\cal I}$ can be as close to the price of
the optimal solution for ${\cal I'}$ as we want. By \cite{setcover}, 
set cover is hard to approximate beyond a factor of
$\log n$ unless $NP =DTIME(n^{O(\log\log n)})$. Therefore the discounted
reverse auction problem can not be approximated within factor
$(1-o(1))\log n$ unless $NP=DTIME(n^{O(\log\log n)})$.

\end{document}